\newcommand{\N}{\mathcal{N}}
\newcommand{\C}{\mathcal{C}}
\newcommand{\T}{\mathcal{T}}
\renewcommand{\L}{\mathcal{L}}
\newcommand{\R}{\mathcal{R}}
\newcommand{\B}{\mathcal{B}}
\title{Combinatorial Scoring of Phylogenetic Networks\thanks{The work is supported 
by the National Science Foundation under the grant No. IIS-1462107.}}
\author{Nikita Alexeev\quad and\quad Max A. Alekseyev}
\institute{The George Washington University, Washington, D.C., U.S.A.}
\begin{document}

\maketitle 

\thispagestyle{plain}


\vspace{-1em}
\begin{abstract}
Construction of phylogenetic trees and networks for extant species from their characters represents one of the key problems in phylogenomics.
While solution to this problem is not always uniquely defined and there exist multiple methods for tree/network construction, 
it becomes important to measure how well the constructed networks capture the given character relationship across the species.

In the current study, we propose a novel method for measuring the \emph{specificity} of a given phylogenetic network in terms of 
the total number of distributions of character states at the leaves that the network may impose. 
While for binary phylogenetic trees, this number has an exact formula and 
depends only on the number of leaves and character states but not on the tree topology, 
the situation is much more complicated for non-binary trees or networks. 
Nevertheless, we develop an algorithm for combinatorial enumeration of such distributions, 
which is applicable for arbitrary trees and networks under some reasonable assumptions.
\end{abstract}



\section{Introduction} 

The evolutionary history of a set of species is often described with a rooted phylogenetic tree with the species at the leaves and their common ancestor at the root.
Each internal vertex and its outgoing edges in such a tree represent a speciation event followed by independent descents with modifications, which outlines the traditional view of evolution. Phylogenetic trees do not however account for \emph{reticulate events} (i.e., partial merging of ancestor lineages), which may also play a noticeable role in evolution through hybridization, horizontal gene transfer, or recombination~\cite{holland2004,huson2011survey}.
Phylogenetic networks represent a natural generalization of phylogenetic trees to include reticulate events. In particular, phylogenetic networks can often more accurately describe the evolution of characters (e.g., phenotypic traits) observed in the extant species. Since there exists a number of methods for construction of such phylogenetic networks~\cite{ulyantsev2015,wu2013},
it becomes important to measure how well the constructed networks capture the given character relationship across the species.

In the current study, we propose a novel method for measuring the \emph{specificity} of a given phylogenetic network in terms of 
the total number of distributions of character states at the leaves that the network may impose. 
While for binary phylogenetic trees, this number has an exact formula and depends only on the number of leaves and character states but not on the tree topology \cite{steel1992,kelk2015}, the situation is much more complicated for non-binary trees or network. 
Nevertheless, we propose an algorithm for combinatorial enumeration of such distributions, 
which is applicable for arbitrary trees and networks under the assumption that reticulate events do not much interfere with each other as explained below.

We view a \emph{phylogenetic network} $\N$ as a rooted directed acyclic graph (DAG). Let $\N^\star$ be an undirected version of $\N$, 
where the edge directions are ignored. We remark that if there are no reticulate events in the evolution, 
then $\N$ represents a tree and thus $\N^\star$ contains no cycles.
On the other hand, reticulate events in the evolution result in appearance of parallel directed paths in $\N$ and cycles in $\N^\star$. 
In our study, we restrict our attention to \emph{cactus networks} $\N$, for which $\N^\star$ represents a cactus graph, i.e.,
every edge in $\N^\star$ belongs to at most one simple cycle. 
In other words, we require that the simple cycles in $\N^\star$ (resulting from reticulate events) are all pairwise edge-disjoint. 
This restriction can be interpreted as a requirement for reticulate events to appear in ``distant'' parts of the network.
Trivially, trees represent a particular case of cactus networks. 
We remark that some problems, which are NP-hard for general graphs, are polynomial for cactus graphs \cite{korneyenko1994}, 
and some phylogenetic algorithms are also efficient for cactus networks \cite{brandes2009}. 
We also remark that cactus networks generalize galled trees (where cycles are vertex-disjoint) and represent a particular case of galled networks (where cycles may share edges)~\cite{Huson2009}.

We assume that the \emph{leaves} (i.e., vertices of outdegree $0$) of a given phylogenetic network $\N$ represent extant species.
A \emph{$k$-state character} is a partition of the species (i.e., leaves of $\N$) into $k$ nonempty sets.
In this paper, we consider only \emph{homoplasy-free} multi-state characters (see \cite{semple2002tree}), 
and enumerate the possible number of such $k$-state characters for any particular $\N$.


\section{Methods}

Let $\N$ be a cactus network. 
For vertices $u,v$ in $\N$, we say that $u$ is an \emph{ancestor} of $v$ and $v$ is a \emph{descendant} of $u$, denoted $u\succcurlyeq v$, if there exists a path from $u$ to $v$ (possibly of length $0$ when $u=v$). Similarly, we say that $v$ is \emph{lower than} $u$, denoted $u\succ v$, if $u\succcurlyeq v$ and $u\ne v$. For a set of vertices $V$ of $\N$, we define a \emph{lowest common ancestor} as a vertex $u$ in $\N$ such that $u\succcurlyeq v$ for all $v\in V$ and there is no vertex $u'$ in $\N$ such that $u\succ u'\succcurlyeq v$ for all $v\in V$. While for trees a lowest common ancestor is unique for any set of vertices, this may be not the case for networks in general. However, in Section~\ref{sec:networks}, we will show that a lowest common ancestor in a cactus network is also unique for any set of vertices.

A $k$-state character on $\mathcal{N}$ can be viewed as a \emph{$k$-coloring}
on the leaves of $\mathcal{N}$,
i.e., a partition of the leaves into $k$ nonempty subsets, each colored with a unique color numbered from $1$ to $k$ (in an arbitrary order). 
A $k$-state character is homoplasy-free if the corresponding $k$-coloring $\C$ of the leaves of $\N$ is \emph{convex}, i.e., the coloring $\C$ can be expanded to some internal vertices of $\N$ such that the subgraphs induced by the vertices of each color are rooted and connected.

Our goal is to compute the number of homoplasy-free multi-state characters on the leaves of $\N$, which is the same as the total number of convex colorings $p(\N)=\sum_{k=1}^{\infty} p_k(\N)$, 
where $p_k(\mathcal{N})$ is the number of convex $k$-colorings on the leaves of $\N$.

\subsection{Trees}

In this section, we describe an algorithm for computing $p(\mathcal{T})$, where $\mathcal{T}$ is a rooted phylogenetic tree. 

We uniquely expand each convex $k$-coloring of the leaves of $\T$ to a \emph{partial $k$-coloring} of its internal vertices as follows. 
For the set $L_i$ of the leaves of color $i$, we color to the same color $i$ 
their lowest common ancestor $r_i$ and all vertices on the paths from $r_i$ to the leaves in $L_i$ (since the $k$-coloring of leaves is convex, this coloring procedure is well-defined). 
We call such partial $k$-coloring of (the vertices of) $\T$ \emph{minimal}. 
Alternatively, a partial $k$-coloring on $\T$ is minimal if and only if for each $i=1,2,\dots,k$, 
the induced subgraph $\T_i$ of color $i$ is rooted and connected (i.e., forms a subtree of $\T$), and removal of any vertex of $\T_i$ that is not a leaf of $\T$ breaks the connectivity of $\T_i$.

By construction, $p_k(\T)$ equals the number of minimal $k$-colorings of $\T$.

The number $p_k(\T)$ in the case of \emph{binary} trees is known \cite{steel1992,kelk2015} and depends only on the number of leaves in a binary tree $\T$, 
but not on its topology. 

\begin{theorem}[{\cite[Proposition 1]{steel1992}}]
\label{th:fib}
Let $\mathcal{T}$ be a rooted binary tree with $n$ leaves. 
Then the number of convex $k$-colorings of $\mathcal{T}$ is $\binom{2n-k-1}{k-1}$. 
Correspondingly, $p(\mathcal{T})$ equals the Fibonacci number $F_{2n-1}$.
\end{theorem}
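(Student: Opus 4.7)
The plan is to prove the formula $p_k(\T) = \binom{2n-k-1}{k-1}$ by induction on $n$, establishing a recurrence via a \emph{cherry reduction}. The base cases $n \in \{1,2\}$ can be verified directly. For the inductive step with $n \geq 3$, I would fix a cherry of $\T$ consisting of two sibling leaves $a, b$ with common parent $v$ (which exists in any binary tree with at least two leaves), and partition the convex $k$-colorings of $\T$ according to the colors of $a$ and $b$ and the forced status of $v$ in the minimal extension.

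In Case~A, where $a$ and $b$ receive the same color, contracting the cherry---i.e., deleting $a, b$ and treating $v$ as a new leaf---gives a bijection with convex $k$-colorings of a binary tree $\T_1$ with $n-1$ leaves. In Case~B, where $a$ and $b$ lie in distinct classes, the only leaves below $v$ are $a$ and $b$, so $v$ belongs to $a$'s color class iff that class contains a leaf other than $a$, and symmetrically for $b$. This yields three mutually exclusive subcases: (B1) $v$ lies in $a$'s class and $\{b\}$ is a singleton; (B2) the symmetric version; and (B3) both $\{a\}$ and $\{b\}$ are singletons and $v$ is uncolored.

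For each subcase I would construct an explicit bijection with convex colorings of a smaller tree. Subcase~B3 bijects with convex $(k-2)$-colorings of the tree $\T_3$ obtained from $\T$ by deleting $a, b$ and suppressing the ensuing degree-2 ancestor. Subcase~B1 bijects with those convex $(k-1)$-colorings of $\T_2 := \T \setminus \{b\}$ (with $v$ suppressed, leaving $n-1$ leaves) in which $a$ does not form a singleton class; the singleton-$a$ colorings of $\T_2$ biject in turn with convex $(k-2)$-colorings of $\T_3$, so Subcase~B1 contributes $p_{k-1}(\T_2) - p_{k-2}(\T_3)$, and Subcase~B2 contributes the same. Summing over all cases yields the recurrence
\[
p_k(\T) \;=\; p_k(\T_1) + 2\,p_{k-1}(\T_2) - p_{k-2}(\T_3).
\]
Applying the inductive hypothesis to the three right-hand terms and then checking the binomial identity
\[
\binom{2n-k-3}{k-1} + 2\binom{2n-k-2}{k-2} - \binom{2n-k-3}{k-3} \;=\; \binom{2n-k-1}{k-1},
\]
which follows from two applications of Pascal's rule, closes the induction. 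As a byproduct, the recurrence shows that $p_k(\T)$ depends only on $n$ and $k$, confirming the topology-independence claim implicit in the theorem.

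The Fibonacci statement then reduces to a classical identity: substituting $j = k - 1$,
\[
p(\T) \;=\; \sum_{k \geq 1} \binom{2n-k-1}{k-1} \;=\; \sum_{j \geq 0} \binom{2n-2-j}{j} \;=\; F_{2n-1}.
\]
The main technical obstacle is verifying the Case~B bijections: one must check that deleting a leaf and suppressing the resulting degree-2 vertex preserves convexity in both directions, and that the minimal-extension status of $v$ is indeed uniquely determined by the class structure of $a, b$ as claimed (so that Subcases B1--B3 are exhaustive and disjoint). Edge cases, such as when $v$ is the root of $\T$ or when $k$ is small enough that the subcounts involve degenerate binomials, are handled cleanly by the standard convention $\binom{m}{r} = 0$ for $r \notin \{0,\ldots,m\}$.
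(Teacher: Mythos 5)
Your proof is correct, but note that the paper does not prove this statement at all: it is imported by citation from Steel's 1992 paper (Proposition 1 there), so there is no in-paper argument to compare against. Your cherry-reduction induction is a legitimate self-contained route. The key structural observation --- that in a convex coloring with the cherry leaves $a,b$ in distinct classes, at most one of those two classes can contain a further leaf (since the cherry parent $v$ would otherwise be forced into both classes) --- is exactly what makes subcases B1--B3 exhaustive and disjoint, and your bijections via leaf deletion and degree-2 suppression do preserve convexity in both directions. The recurrence $p_k(\T)=p_k(\T_1)+2p_{k-1}(\T_2)-p_{k-2}(\T_3)$, the binomial identity (which reduces after one expansion of $\binom{2n-k-1}{k-1}$ to a single instance of Pascal's rule), and the closing identity $\sum_{j\ge 0}\binom{2n-2-j}{j}=F_{2n-1}$ all check out, and the recurrence indeed delivers topology-independence as a byproduct. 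An alternative derivation that stays inside the paper's own machinery would be to specialize Theorem~\ref{thm:trees} to $d=2$, where the formulas collapse to $H_v=(F_{u_1}+H_{u_1})(F_{u_2}+H_{u_2})$ and $F_v=\frac{1}{x}(F_{u_1}+G_{u_1})(F_{u_2}+G_{u_2})$, and prove the closed form for $F_r+H_r$ by induction on the generating functions; your approach is more elementary and avoids the semiminimal-coloring apparatus entirely. The one point worth writing out carefully in a full version is the suppression step when the degree-2 vertex being removed is the root (relevant for $\T_3$ when $n=3$), but as you note this is routine.
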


The case of arbitrary (non-binary) trees is more sophisticated. 

Let $\T$ be a rooted tree. For a vertex $v$ in $\T$, 
we define $\mathcal{T}_v$ as the \emph{full subtree} of $\mathcal{T}$ rooted at $v$ and containing all descendants of $v$. 

Let $\T'$ be any rooted tree larger than $\T$ such that $\T$ is a full subtree of $\T'$. 
We call a $k$-coloring of $\T$ \emph{semiminimal} if this coloring is induced by some minimal coloring on $\T'$ (which may use more than $k$ colors). Clearly, all minimal colorings are semiminimal, but not all semiminimal colorings are minimal.
 We remark that a semiminimal $k$-coloring of $\T$, in fact, does not depend on the topology of $\T'$ outside $\T$ and thus is well-defined for $\T$.

\begin{lemma}
A semiminimal $k$-coloring of $\T$ is well-defined. \label{lem:semiminimal}
 \end{lemma}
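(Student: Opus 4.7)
The plan is to give an intrinsic characterization of semiminimal $k$-colorings on $\T$, phrased only in terms of $\T$ and the coloring of its leaves, with no reference to any enclosing tree $\T'$. Well-definedness then follows immediately: any two choices of $\T'$ must produce exactly the colorings satisfying the intrinsic condition.

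To derive the characterization, I would analyze how a minimal coloring $\C'$ on $\T'$ restricts to $\T$. Let $v$ denote the root of $\T$, and for each color $i$ occurring on the leaves of $\T$ let $r'_i$ be the LCA in $\T'$ of the color-$i$ leaves. Two cases arise. If $r'_i$ lies inside $\T$, then every color-$i$ leaf lies in $\T$ and the color-$i$ class in $\T$ coincides with the one in $\T'$, so it already satisfies the minimality condition. If $r'_i$ lies strictly above $v$, then $v$ must itself be colored $i$, and the color-$i$ class in $\T$ consists of $v$ together with the union of paths from $v$ down to the color-$i$ leaves of $\T$; in this case $v$ may have only one child of color $i$ inside $\T$. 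Because $v$ carries at most one color, at most one color can fall into this second case, namely the color of $v$ itself if $v$ is colored.

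This motivates the intrinsic characterization: a $k$-coloring of $\T$ is semiminimal if and only if each color class forms a rooted connected subtree of $\T$ and is minimal, with the single allowed exception that, if $v$ is colored, the color class of $v$ is permitted to be rooted at $v$ with only one child of that color in $\T$. The forward implication is immediate from the case analysis above. The converse is the step I expect to be the main obstacle to write cleanly, and I would prove it constructively. If $v$ is either uncolored in $\C$ or already satisfies full minimality at $v$, take $\T' = \T$ and $\C' = \C$. Otherwise, enlarge $\T$ by attaching a new root $v'$ above $v$ together with a new leaf $\ell$ as a sibling of $v$, and set the color of $\ell$ equal to the color of $v$ in $\C$. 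In the resulting $\T'$ the LCA of that color's leaves becomes $v'$, which has two children of that color ($v$ and $\ell$), and all other color classes, being rooted strictly below $v$, are unaffected and remain minimal; hence $\C'$ is a minimal coloring on $\T'$ that induces $\C$ on $\T$.

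Combining both directions, semiminimality of a coloring on $\T$ is equivalent to the intrinsic condition, which depends only on $\T$ and its leaf coloring. This establishes the lemma.
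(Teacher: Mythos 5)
Your proof is correct and takes essentially the same route as the paper: both reduce semiminimality to an intrinsic condition on $\T$ alone (either the induced coloring is minimal, or it becomes minimal once a dummy leaf sharing the root's color is adjoined), which immediately gives independence from the topology of $\T'$ outside $\T$. You additionally spell out the converse direction by exhibiting an explicit $\T'$ realizing any coloring satisfying the intrinsic condition, a step the paper leaves implicit.
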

\begin{proof}
 Let $\C'$ be a minimal coloring of $\T'$ and $\C$ be its induced coloring on $\T$.
 
 If $\C'$ is such that $\T$ and $\T' \setminus \T$ have no common colors, then $\C$ is minimal, and this property does not depend on the topology of $\T' \setminus \T$. 
 
 If $\C'$ is such that $\T$ and $\T' \setminus \T$ have some common color $i$, 
 then the root $r$ of $\T$ and its parent in $\T'$ are colored into $i$ (hence, the shared color $i$ is unique). Then the coloring on $\T\cup\{(r,l)\}$, where $\T$ inherits its coloring from $\T'$ and $l$ is a new leaf colored into $i$, is minimal. This property does not depend on the topology of $\T' \setminus \T$ either.
\qed
\end{proof}

For a semiminimal $k$-coloring $\mathcal{C}$ on $\T_v$, there exist three possibilities:
\begin{itemize}
\item Vertex $v$ is colored and shares its color with at least two of its children. 
In this case, coloring $\mathcal{C}$ is minimal. 
\item Vertex $v$ is colored and shares its color with exactly one of its children, 
and coloring $\mathcal{C}$ is not minimal. 
\item Vertex $v$ is not colored. In this case, coloring $\C$ represents a minimal coloring of $\T_v$.
\end{itemize}
Correspondingly, for each vertex $v$, we define
\begin{itemize}
 \item $f_k(v)$ is the number of minimal $k$-colorings of $\mathcal{T}_v$ such that 
 at least two children of $v$ have the same color (the vertex $v$ must also have this color);
 \item $g_k(v)$ is the number of semiminimal $k$-colorings of $\mathcal{T}_v$ such that 
 the vertex $v$ shares its color with exactly one of its children (i.e., semiminimal but not minimal $k$-colorings);
 \item $h_k(v)$ is the number of minimal $k$-colorings of $\mathcal{T}_v$ such that the vertex $v$ is not colored.
\end{itemize}
We remark that the number of minimal $k$-colorings of $\T$ equals $f_k(r)+h_k(r)$, where $r$ is the root of the tree $\T$.

We define the following generating functions:
\begin{equation}\label{eq:FGH}
F_v(x) = \sum_{k=1}^\infty f_k(v)\cdot x^k;\qquad
G_v(x) = \sum_{k=1}^\infty g_k(v)\cdot x^k;\qquad
H_v(x) = \sum_{k=1}^\infty h_k(v)\cdot x^k.
\end{equation}

For a leaf $v$ of $\T$, we assume $f_k(v) = \delta_{k,1}$ (Kronecker's delta) and $g_k(v)=h_k(v)=0$ for any $k\geq 1$. 
Correspondingly, we have $F_v(x) = x$ and $G_v(x) = H_v(x) = 0$.

If a vertex $v$ has $d$ children $u_1,u_2, \dots,u_d$, then one can compute $F_v(x)$, $G_v(x)$, and $H_v(x)$ 
using the generating functions at the children of $v$ as follows.

\begin{theorem}
 For any internal vertex $v$ of $\mathcal{T}$, we have
 \begin{small}
 \begin{align}
  H_v(x) &= \prod_{i=1}^d (F_{u_i}(x) + H_{u_i}(x)); \label{eq:H}\\
  G_v(x) &= \sum_{i=1}^d (F_{u_i}(x) + G_{u_i}(x)) \prod_{j=1\atop j \neq i}^d (F_{u_j}(x) + H_{u_j}(x))
  = H_v(x)\cdot \sum_{i=1}^d \frac{F_{u_i}(x) + G_{u_i}(x)}{F_{u_i}(x) + H_{u_i}(x)};\label{eq:G} \\
  F_v(x) &= x\prod_{i=1}^d \left(F_{u_i}(x) + H_{u_i}(x)+\frac{F_{u_i}(x) + G_{u_i}(x)}{x}\right)-x\cdot H_v(x)-G_v(x); \label{eq:F}
 \end{align}
 \end{small}
where $u_1,\,u_2,\,\dots,\,u_d$ are the children of $v$.
\label{thm:trees}
\end{theorem}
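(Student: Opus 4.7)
My plan is to establish the three formulas by a case analysis of what color vertex $v$ receives, combined with a straightforward subtree-restriction argument. The key observation underlying everything is that, in a semiminimal coloring, a colored internal vertex must have at least two in-class neighbors, since otherwise removing it from its color class would not disconnect the class, contradicting the minimality in any extending tree $\T'$ provided by Lemma~\ref{lem:semiminimal}. Combined with the fact that color classes are unordered, so that partitions of leaves combine by disjoint union and ordinary generating functions multiply, this yields each identity with minimal bookkeeping.

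I would first dispatch \eqref{eq:H}. A minimal $k$-coloring of $\T_v$ with $v$ uncolored restricts on each $\T_{u_i}$ to a minimal coloring whose classes are disjoint from those of all the other subtrees. The restriction at $\T_{u_i}$ is therefore of type $F_{u_i}$ (if $u_i$ is colored, in which case $u_i$ is forced to share its color with at least two of its own children) or of type $H_{u_i}$ (if $u_i$ is uncolored). The OGFs of the $\T_{u_i}$ thus multiply, giving $H_v = \prod_{i=1}^d (F_{u_i}+H_{u_i})$.

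For \eqref{eq:G}, let $u_i$ be the unique child of $v$ sharing its color with $v$. Each other $\T_{u_j}$ with $j\ne i$ inherits a minimal coloring whose classes are disjoint from $v$'s class and so contributes $F_{u_j}+H_{u_j}$. At $u_i$ itself, the in-class-neighbor observation (applied to $u_i$, which already has $v$ as an in-class parent) forces $u_i$ to share its color with at least one of its own children, so the restriction to $\T_{u_i}$ is of type $F_{u_i}$ or $G_{u_i}$, contributing $F_{u_i}+G_{u_i}$. Because $v$'s class merely absorbs $u_i$'s class without introducing a new class, no extra factor of $x$ is needed, and summing over the choice of $i$ gives \eqref{eq:G}.

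For \eqref{eq:F}, I parametrize the coloring by the set $S\subseteq\{1,\dots,d\}$ of children sharing the color of $v$, with $|S|\ge 2$. Exactly as in the previous paragraph, each $i\in S$ contributes $F_{u_i}+G_{u_i}$ and each $j\notin S$ contributes $F_{u_j}+H_{u_j}$, but now the $|S|$ color classes coming from the children in $S$ all collapse into the single class of $v$, so the exponent of $x$ drops by $|S|-1$, yielding an extra factor $x^{1-|S|}$. This gives
\begin{equation*}
F_v(x) \;=\; \sum_{S\subseteq\{1,\dots,d\},\,|S|\ge 2} x^{1-|S|}\prod_{i\in S}(F_{u_i}+G_{u_i})\prod_{j\notin S}(F_{u_j}+H_{u_j}).
\end{equation*}
Expanding $x\prod_{i=1}^d\bigl(F_{u_i}+H_{u_i}+(F_{u_i}+G_{u_i})/x\bigr)$ distributively reproduces exactly this sum but now ranging over all $S\subseteq\{1,\dots,d\}$; the $S=\emptyset$ contribution equals $xH_v(x)$ and the $|S|=1$ contributions sum to $G_v(x)$, so subtracting these two recovers the stated form of \eqref{eq:F}. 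The main obstacle—and the only nontrivial conceptual step—is the class-merging bookkeeping that produces the $x^{1-|S|}$ factor, together with the verification that the case of a colored $u_i$ with no colored grandchildren really is excluded by semiminimality; once these are in hand, the remaining manipulation is just routine expansion of a product.
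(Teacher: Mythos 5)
Your proposal is correct and follows essentially the same route as the paper's proof: the same trichotomy on whether $v$ is uncolored, shares its color with exactly one child, or with a set $S$ of at least two children, the same identification of the child contributions as $F+H$ versus $F+G$, and the same $x^{1-|S|}$ color-merging factor with subtraction of the $|S|=0$ and $|S|=1$ terms. No substantive differences to report.
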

\begin{proof}
Suppose that vertex $v$ is not colored in a minimal $k$-coloring of $\T$. 
Then each its child is either not colored or has a color different from those of the other children. 
Furthermore, if a child of $v$ is colored, its color must appear at least twice among its own children. Thus, 
the number of semiminimal colorings of $\T_{u_i}$ in this case is $f_{k_i}(u_i)+h_{k_i}(u_i)$, where $k_i$ is the number of colors in $\T_{u_i}$ ($i=1,2,\dots,d$).
Also, the subtrees $T_{u_i}$ cannot share any colors with each other.
Hence, the number of minimal $k$-colorings of $\T_v$ with non-colored $v$ equals 
$$\sum_{k_1+\dots+k_d=k} \,\, \prod_{i=1}^d (f_{k_i}(u_i)+h_{k_i}(u_i)),$$ 
implying formula \eqref{eq:H}.

Now suppose that vertex $v$ is colored in a minimal $k$-coloring of $\T$. Then $v$ must share its color with at least one of its children. Consider two cases.

\textbf{Case 1.} Vertex $v$ shares its color with exactly one child, say $u_i$. Then there are $f_{k_i}(u_i)+g_{k_i}(u_i)$ semiminimal $k_i$-colorings 
for $\T_{u_i}$. For any other child $u_j$ ($j\ne i$), similarly to the above, 
we have that the number of semiminimal $k_j$-colorings equals $f_{k_j}(u_j)+h_{k_j}(u_j)$. 
Hence, the number of semiminimal $k$-colorings of $\T_v$ in this case equals 
$$\sum_{k_1+\dots+k_d=k} (f_{k_i}(u_i)+g_{k_i}(u_i)) \prod_{j=1\atop j\ne i}^d (f_{k_j}(u_j)+h_{k_j}(u_j)),$$
implying formula \eqref{eq:G}.

\textbf{Case 2.} Vertex $v$ shares its color with children $u_i$, $i\in I$, $|I|\geq 2$, but not with $u_j$ for $j\notin I$. 
Since the color of $v$ is the only color shared by $T_{u_i}$, we have $k_1+\dots+k_d=k+|I|-1$. Similarly to Case 1,
we get that the number of minimal $k$-colorings of $\mathcal{T}_v$ is a coefficient at $x^{k+|I|-1}$ in
$$\prod_{i\in I} \left(F_{u_i}(x) + G_{u_i}(x)\right)\prod_{j \notin I} \left(F_{u_j}(x) + H_{u_j}(x)\right),$$
which is the same as the coefficient of $x^k$ in
$$x\prod_{i\in I} \frac{\left(F_{u_i}(x) + G_{u_i}(x)\right)}{x}\prod_{j \notin I} \left(F_{u_j}(x) + H_{u_j}(x)\right).$$
Summation of this expression over all subsets $I\subset\{1,2,\dots,d\}$ gives us the first term of \eqref{eq:F}, 
from where we subtract the sum over $I$ with $|I|=0$ (the term $x\cdot H_v(x)$) and
with $|I|=1$ (the term $G_v(x)$) to prove \eqref{eq:F}.
\qed
\end{proof}


\subsection{Cactus Networks}\label{sec:networks}

In this section, we show how to compute $p(\N)$, where $\N$ is a cactus network. 
The following lemma states an important property of cactus networks. 

\begin{lemma}\label{lem:lca}
Let $\N$ be a cactus network. Then for any set of vertices of $\N$, their lowest common ancestor is unique.
\end{lemma}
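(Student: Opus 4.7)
The plan is to proceed by contradiction: assume $V$ has two distinct minimal common ancestors $u_1\ne u_2$ (necessarily $\succcurlyeq$-incomparable by minimality), and construct a common ancestor of $V$ strictly below $u_1$ using the cactus structure of $\N^\star$.

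To set things up, I would fix any $v\in V$ and choose simple directed paths $A\colon r\to u_1\to v$ and $B\colon r\to u_2\to v$ through the root $r$; these exist and are simple because $\N$ is a DAG. Let $r=w_0,w_1,\dots,w_m=v$ be the vertices common to $A$ and $B$: antisymmetry of $\succcurlyeq$ forces them to appear in the same order on both paths, and between consecutive $w_i,w_{i+1}$ the sub-paths $A_i,B_i$ are either identical or internally vertex-disjoint. In the latter case $A_i\cup B_i$ is a simple cycle in $\N^\star$ whose two sides are directed $\N$-paths from $w_i$ to $w_{i+1}$, so $w_i$ and $w_{i+1}$ are its unique local source and local sink.

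Next I would show that $u_1$ and $u_2$ sit strictly inside the same pair $A_i,B_i$. Neither can coincide with any $w_j$, since that would place both vertices on the other path and force comparability. Hence $u_1$ is strictly inside some $A_i$ and $u_2$ strictly inside some $B_{i'}$; but $i<i'$ would yield $u_1\succ w_{i+1}\succcurlyeq w_{i'}\succ u_2$, contradicting incomparability, and $i>i'$ is symmetric. So $i=i'$, the cycle $C:=A_i\cup B_i$ passes through both $u_1$ and $u_2$, and its bottom vertex $w_{i+1}$ is a proper descendant of both, with $w_{i+1}\succcurlyeq v$.

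The main obstacle, and the only place where the cactus hypothesis is used essentially, is to promote $w_{i+1}$ to an ancestor of \emph{every} $v'\in V$. Repeating the construction with $v'$ in place of $v$ produces a cycle $C'$ in $\N^\star$ that again contains both $u_1$ and $u_2$. Because distinct cycles in a cactus share at most one vertex, $C'=C$; and because a cycle in $\N^\star$ realized as two directed $\N$-paths has its top and bottom uniquely determined by the cycle itself, the bottom of $C'$ is again $w_{i+1}$. Therefore $w_{i+1}$ lies on the directed path $A'\colon r\to u_1\to v'$ below $u_1$, giving $w_{i+1}\succcurlyeq v'$. Thus $w_{i+1}$ is a common ancestor of $V$ strictly below $u_1$, contradicting the minimality of $u_1$.
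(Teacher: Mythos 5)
Your proof is correct, but it takes a genuinely different route from the paper's. The paper first reduces to two-element sets (replacing any pair of vertices by its lowest common ancestor does not change the common ancestors of the whole set), and then, assuming a pair has two lowest common ancestors $r_1,r_2$, takes a lowest common ancestor $r'$ of $\{r_1,r_2\}$ and exhibits two simple cycles of $\N^\star$ that both contain the path from $r'$ to $r_1$; this contradicts the definition of a cactus immediately, since a shared path means a shared edge. You instead work with the whole set $V$ at once: you decompose the two root-to-$v$ paths through $u_1$ and $u_2$ into an alternating chain of shared vertices and internally disjoint ``lenses'', locate $u_1$ and $u_2$ strictly inside the same lens, and then promote the bottom of that lens to a common ancestor of all of $V$. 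The payoff of your version is that it produces an explicit lower common ancestor (the sink of the unique cycle through $u_1$ and $u_2$) rather than a bare contradiction with the cactus property, at the cost of a longer setup; all the individual steps check out (shared vertices appear in the same order by acyclicity, $u_1,u_2$ cannot be shared vertices, the lens indices coincide by incomparability, and the source and sink of a cycle realized as two directed paths are determined by the cycle alone). The one assertion you should justify is that distinct simple cycles in a cactus share at most one vertex: the paper defines a cactus by the condition that every edge lies in at most one simple cycle, and while this is equivalent to your formulation (if two edge-disjoint simple cycles passed through common vertices $a\neq b$, the union of an $a$--$b$ arc of one with an $a$--$b$ arc of the other is an even subgraph from which one extracts a third simple cycle reusing an edge of the first), this equivalence is precisely where the cactus hypothesis enters your argument and deserves a sentence of proof.
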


\begin{proof}
For any set of vertices, there exists at least one common ancestor, which is the root of $\N$. 

It is enough to prove the statement for 2-element sets of vertices. Indeed, if for any pair of vertices their lowest common ancestor is unique, then
in any set of vertices we can replace any pair of vertices with their lowest common ancestor without affecting the lowest common ancestors of the whole set. After a number of such replacements, the set reduces to a single vertex, which represents the unique lowest common ancestor of the original set.

Suppose that for vertices $u_1$ and $u_2$ in $\N$, there exist two lowest common ancestors $r_1$ and $r_2$.
Let $r'$ be a lowest common ancestor of $r_1$ and $r_2$ (clearly, $r'\ne r_1$ and $r'\ne r_2)$, and $P_1$ and $P_2$ be paths from $r'$ to $r_1$ and $r_2$, respectively. Then $P_1$ and $P_2$ are edge-disjoint. Let $Q_{i,j}$ be paths from $r_i$ to $u_j$ ($i,j=1,2$). It easy to see that for each $i=1,2$, the paths $Q_{i,1}$ and $Q_{i,2}$ are edge-disjoint.
Then the paths $P_1$, $Q_{1,1}$, $P_2$, $Q_{2,1}$ form a simple cycle in $\N^*$; similarly, the paths $P_1$, $Q_{1,2}$, $P_2$, $Q_{2,2}$ form a simple cycle in $\N^*$. These simple cycles share the path $P_1$ (and $P_2$), 
a contradiction to $\N$ being a cactus network.
\qed
\end{proof}

In contrast to trees, cactus networks may contain branching paths, which end at vertices of indegree $2$ called \emph{sinks} (also known as \emph{reticulate vertices}~\cite{Huson2009,huson2011survey}). 
Clearly, there are no vertices of indegree greater than $2$ in a cactus network (if there are three incoming edges to some vertex then each of them belongs to two simple cycles in $\N^\star$). We will need the following lemma.

\begin{lemma}\label{lem:nosinks}
Let $p_l$ and $p_r$ be the parents of some sink in a cactus network $\N$, and $s$ be their lowest common ancestor. 
Let $P_l$ and $P_r$ be paths from $s$ to $p_l$ and $p_r$, respectively. Then $P_l$ and $P_r$ (i) are edge-disjoint; (ii) do not contain sinks, except possibly vertex $s$; 
and (iii) are unique.
\end{lemma}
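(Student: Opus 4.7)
The plan is to exploit the single simple cycle that the sink $t$ manufactures in $\N^\star$: for any choice of directed paths $P_l,P_r$ from $s$ to $p_l,p_r$, the closed walk $s\leadsto p_l\to t\leftarrow p_r\leadsto s$ defines a cycle $C$ in $\N^\star$, which, once (i) is established, is a simple cycle. The cactus hypothesis then forbids any other simple cycle in $\N^\star$ from sharing even a single edge with $C$, and each of parts (i)--(iii) will be reduced to exhibiting precisely such a forbidden second cycle.

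For (i) I would appeal directly to Lemma~\ref{lem:lca}. If some vertex $w\ne s$ lay on both $P_l$ and $P_r$, then by construction $s\succ w$ while $w\succcurlyeq p_l$ and $w\succcurlyeq p_r$, contradicting $s$ being the lowest common ancestor of $p_l$ and $p_r$. Thus $P_l$ and $P_r$ are actually vertex-disjoint apart from $s$, hence edge-disjoint, and $C$ is a genuine simple cycle in $\N^\star$.

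For (iii) I would combine (i) with the cactus property. Suppose two distinct paths $P_l$ and $P_l'$ from $s$ to $p_l$ exist. Replacing $P_l$ by $P_l'$ in the construction above yields a second simple cycle $C'$ in $\N^\star$ (again simple by (i) applied to $P_l',P_r$). Then $C$ and $C'$ share at minimum the edges $(p_l,t)$ and $(p_r,t)$, contradicting the cactus condition. The symmetric argument disposes of non-uniqueness of $P_r$.

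Part (ii) is where I expect to spend the most care. Assume for contradiction that a sink $w\ne s$ lies on $P_l$; let $(q_1,w)$ be the edge by which $P_l$ enters $w$ and let $q_2$ be the other parent of $w$. Applying (i) to the sink $w$ supplies edge-disjoint paths $P_1''$ and $P_2''$ from $s'' = \mathrm{LCA}(q_1,q_2)$ to $q_1$ and $q_2$, which together with the edges $(q_1,w)$ and $(q_2,w)$ close up to a simple cycle $C''$ through $(q_1,w)$. Since $(q_1,w)$ also lies on $C$, the cactus hypothesis will be violated as soon as $C\ne C''$; the key observation is that every vertex of $C''$ is either $w$ itself or a proper ancestor of $w$, whereas $t$ is a proper descendant of $w$, so the edge $(p_l,t)$ belongs to $C\setminus C''$. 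The symmetric argument handles sinks on $P_r$.
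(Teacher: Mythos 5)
Your proof is correct and takes essentially the same route as the paper's: part (i) falls out of the definition of the lowest common ancestor, and parts (ii) and (iii) are settled by exhibiting a second simple cycle sharing an edge with the cycle $C$ formed by $P_l$, $P_r$ and the two edges into the sink, contradicting the cactus property. The only divergence is in (iii), where the paper deduces uniqueness from (ii) (two distinct paths from $s$ to $p_l$ would have to reconverge at a sink on $P_l$) while you build a second simple cycle through $(p_l,t)$ directly; both work, and your treatment of (ii) is in fact more careful than the paper's one-line claim, since you explicitly verify $C\ne C''$ by locating $(p_l,t)$ in $C\setminus C''$.
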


\begin{proof}
 Since $s$ is the lowest common ancestor of $p_l$ and $p_r$, the paths $P_l$ and $P_r$ are edge-disjoint. 

Suppose that there is an edge $(u,t')$ on a path from $s$ to $p_l$ such that $t'$ is a sink. 
Then this edge belongs to two different simple cycles in $\N^*$, a contradiction to $\N$ being a cactus network.

It is easy to see that if there exists a path $P'_l$ from $s$ to $p_l$ different from $P_l$, then the paths $P_l$ and $P'_l$ would share a sink (different from $s$), which does not exist on $P_l$. Hence, the path $P_l$ is unique and so is $P_r$.
\qed
\end{proof}

Let $t$ be a sink in $\N$ and $p_l$ and $p_r$ be its parents. Let $s$ be the lowest common ancestor of $p_l$ and $p_r$ (which exists by Lemma~\ref{lem:lca}). 
Lemma~\ref{lem:nosinks} implies that the paths $P_1$ and $P_2$ from $s$ to $t$ that visit vertices $p_l$ and $p_r$, respectively, are unique and edge-disjoint. 
We call such a vertex $s$ \emph{source} and refer to the unordered pair of paths $\{P_1,P_2\}$ as a \emph{simple branching path} 
(denoted $s \rightrightarrows t$) and to each of these paths as a \emph{branch} of $s \rightrightarrows t$. 
Notice that one source may correspond to two or more sinks in $\N$.

For any two vertices $u$ and $v$ connected with a unique path in $\N$, we denote this path by $u \rightarrow v$ (which is a null path if $v=u$).
A \emph{branching path} between vertices $p$ and $q$ in $\N$ is an alternating sequence of unique and simple branching paths
$$p \rightarrow s_1 \rightrightarrows t_1 \rightarrow \dots \rightarrow s_m \rightrightarrows t_m \rightarrow q,$$ 
where some of the unique paths may be null.

\begin{lemma}\label{lem:bpath}
For any vertices $u \succcurlyeq v$ in a cactus network $\N$, the union of all paths between them forms a branching path.
\end{lemma}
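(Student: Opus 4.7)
Let $G$ denote the subgraph of $\N$ consisting of all edges on some directed path from $u$ to $v$; the claim is that $G$ decomposes as a branching path from $u$ to $v$. I would argue by induction on the number $m(G)$ of reticulate (indegree-$2$) vertices of $\N$ that lie in $G$.

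For the base case $m(G) = 0$, I would first observe that $G$ also has no vertex of outdegree $\geq 2$: any such vertex $w$ would initiate two distinct outgoing $G$-paths toward $v$, whose first meeting descendant would have indegree $\geq 2$ in $G$ and hence be reticulate, contradicting $m(G) = 0$. Thus $G$ is a single directed path, which is a branching path with no simple branching part.

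For the inductive step, I would trace $G$ downward from $u$ along its forced outgoing edges until reaching the first vertex $s_1$ of outdegree $\geq 2$ in $G$; the path $u \to s_1$ is unique in $\N$, since any second $u \to s_1$ path would introduce a reticulate vertex strictly above $s_1$ lying in $G$ and having a $G$-descendant on a $u \to v$ path, contradicting the minimality of $s_1$. The cactus property then pins the outdegree of $s_1$ in $G$ to exactly $2$: three $G$-outgoing edges $e_1, e_2, e_3$ from $s_1$ would produce, via first-meeting of the pairs $(e_1,e_2)$ and $(e_1,e_3)$, two distinct simple cycles in $\N^\star$ both containing the initial edge $e_1$, violating edge-disjointness. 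Let $t_1$ be the first meeting descendant of $s_1$'s two outgoing edges; then $t_1$ is reticulate, and Lemmas~\ref{lem:lca} and \ref{lem:nosinks} imply that $s_1$ is the unique source of $t_1$ and the two $s_1 \to t_1$ paths form the branches of a simple branching path $s_1 \rightrightarrows t_1$.

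Finally, every $u \to v$ path is forced through $s_1$ and then through $t_1$, so $G$ decomposes as the unique path $u \to s_1$, followed by $s_1 \rightrightarrows t_1$, followed by the subgraph $G'$ of all directed paths from $t_1$ to $v$ in $\N$. Since $t_1$ has indegree $0$ in $G'$, we have $m(G') = m(G) - 1$; the induction hypothesis provides a branching-path decomposition of $G'$, and concatenation finishes the decomposition of $G$. The principal obstacle I anticipate is this cactus outdegree bound at $s_1$: although the first-meeting descendants of different edge pairs at $s_1$ need not coincide, the initial edge shared by two pairs still lies in both resulting simple cycles of $\N^\star$, and this shared-edge contradiction with the cactus property is what needs to be executed carefully.
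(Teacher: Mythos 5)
Your overall strategy is the mirror image of the paper's: the paper also inducts on the number of branchings in the sub-DAG of all $u$--$v$ paths and peels off one simple branching path per step, except that it peels from the bottom (it picks a lowest sink $t$, for which the path $t \to v$ is unique, and recurses on the paths from $u$ to the corresponding source), whereas you peel from the top. This reversal is not cosmetic, because it changes which ``no mid-branch traffic'' fact is needed. The paper's direction only requires that no path can \emph{enter} a branch of $s \rightrightarrows t$ at an internal vertex, which is exactly Lemma~\ref{lem:nosinks}(ii): internal vertices of a branch are not sinks, so they have no second incoming edge. Your direction requires instead that no path can \emph{exit} a branch of $s_1 \rightrightarrows t_1$ at an internal vertex, i.e., that no internal vertex $w$ of a branch has a second outgoing edge in $G$. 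This dual statement is not covered by Lemma~\ref{lem:nosinks}, and you assert it without proof when you write that every $u \to v$ path is forced ``then through $t_1$.'' That is the one genuine gap. It is fixable by the same kind of cactus argument you use to bound the outdegree of $s_1$: if such a $w$ existed, its two $G$-outgoing edges would first-meet at some reticulate vertex $t'$ and yield a simple cycle in $\N^\star$ containing the branch edge out of $w$; that edge already lies on the simple cycle formed by $s_1 \rightrightarrows t_1$, and the two cycles are distinct (their orientation-reversal vertices are $\{w,t'\}$ versus $\{s_1,t_1\}$, and $w$ is internal), contradicting the cactus property. This step needs to be written out.

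Two smaller points. First, your induction measure is declared as the number of indegree-$2$ vertices \emph{of $\N$} lying in $G$; but $t_1$ is such a vertex and it survives into $G'$, so under that definition $m$ need not drop. Your own remark that ``$t_1$ has indegree $0$ in $G'$'' shows you really mean vertices of indegree $2$ \emph{in the subgraph} (the paper's measure: the number of sinks of the subnetwork), which does decrease by exactly one; state it that way. Second, when you invoke Lemmas~\ref{lem:lca} and~\ref{lem:nosinks} to conclude that $s_1$ is the source of $t_1$, the missing link is that $s_1$ coincides with the lowest common ancestor of the two parents of $t_1$; this again follows from the cactus property, since the cycle built from your two first-meeting paths and the cycle $s \rightrightarrows t_1$ furnished by Lemma~\ref{lem:nosinks} share the edges into $t_1$ and hence must be the same cycle. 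Neither of these is a conceptual obstacle, but both should be said explicitly.
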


\begin{proof}
 Suppose $\N$ is a cactus network and $u \succcurlyeq v$ in $\N$. Let $\N'$ be a subnetwork of $\N$ formed by all paths from $u$ to $v$. Clearly, $\N'$ is a rooted (at $u$) cactus. We will prove that $\N'$ is a branching path by induction on the number of sinks in $\N'$. If there are no sinks in $\N'$, then a path from $u$ to $v$ is unique, then the statement holds. Otherwise, there exists a sink $t$ in $\N'$ such that $u \succ t \succcurlyeq v$ and a path from $t$ to $v$ is unique, while there exist multiple paths from $u$ to $t$. Let $s$ be the source in $\N'$ corresponding to $t$. Then the branching path from $s$ to $v$ has the form $s\rightrightarrows t \rightarrow v$. Since every path from $u$ to $v$ visits $t$, it also must visit $s$ (by Lemma~\ref{lem:nosinks} a path cannot enter into a simple branching path $s\rightrightarrows t$ other than through vertex $s$). 
Let $\N''$ be the subnetwork of $\N'$ consisting of all paths from $u$ to $s$. Since the number of sinks in $\N''$ is one less than in $\N'$, by induction it is a branching path. Then $\N'$ is a branching path obtained from $\N''$ by concatenating it with the path $s\rightrightarrows t \rightarrow v$.
\qed
\end{proof}

For vertices $u \succcurlyeq v$ in $\N$, the union of all paths from $u$ to $v$ is called the \emph{maximal branching path}.


We generalize the notion of the minimal $k$-coloring to the case of cactus networks by expanding any convex $k$-coloring of the leaves of $\N$ to a partial coloring of the internal vertices of $\N$ as follows.

\paragraph{Network Coloring Procedure.} For a given convex $k$-coloring of the leaves of $\N$, let $L_i$ be the set of the leaves of color $i$. 
We consider maximal branching paths from the lowest common ancestor $r^{(i)}$ of $L_i$ to all $l\in L_i$, which by Lemma~\ref{lem:bpath} have the form:
$$r^{(i)} \rightarrow s_1^{(i)} \rightrightarrows t^{(i)}_1 \rightarrow \dots \rightarrow s^{(i)}_{m_i} \rightrightarrows t^{(i)}_{m_i} \rightarrow l.$$
At the first step, we color all vertices in the unique subpaths of such maximal branching paths into color $i$ (Fig.~\ref{fig:min}a,b). 
Lemma~\ref{lem:minC} below shows that this coloring procedure (performed for all $i=1,2,\dots,k$) is well-defined.
At the second step, we color some branches of simple branching subpaths of the maximal branching paths from $r^{(i)}$ to the leaves in $L_i$.
Namely, for each branch between $s^{(i)}_j$ and $t^{(i)}_j$ we check if its vertices are colored (at the first step) in any color other than $i$; 
if no other color besides $i$ is present in the branch, we color all its vertices into $i$ (Fig.~\ref{fig:min}b,c). 
Lemma~\ref{lem:minC} below shows that at least one branch of each simple branching subpath is colored this way, implying that the induced subgraphs of each color in the resulting partial coloring are connected.
We refer to the resulting partial coloring as a \emph{minimal $k$-coloring} of $\N$.
  \begin{figure}[!t]
   \begin{center}
   \includegraphics[width = \textwidth]{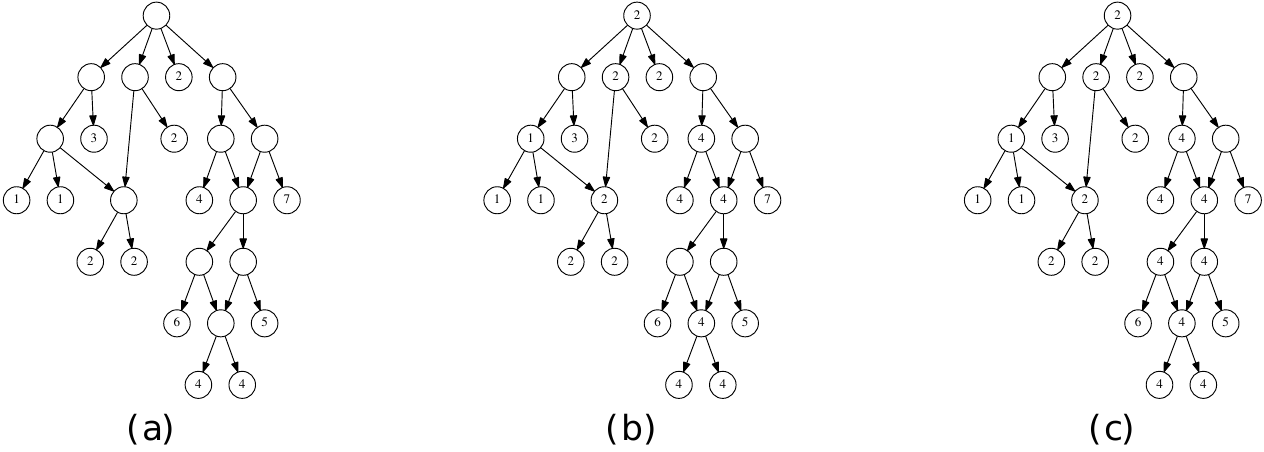}
   \caption{\textbf{(a)} A convex coloring of the leaves of a cactus network $\N$, where the colors are denoted by labels.
   \textbf{(b)} The partial coloring of $\N$ constructed at the first step of the coloring procedure. 
   \textbf{(c)} The minimal coloring of $\N$.}
   \label{fig:min}
   \end{center}
 \end{figure}
\begin{lemma}\label{lem:minC} 
For any convex $k$-coloring on the leaves of a cactus network $\N$, 
the corresponding minimal $k$-coloring of $\N$ is well-defined. Moreover, the induced subgraph of $\N$ of each color is connected.
\end{lemma}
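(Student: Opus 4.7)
The plan is to leverage the convexity hypothesis: by definition there exists an expansion $\C^*$ of the leaf coloring to some internal vertices of $\N$ such that each color class $\C^*_i$ induces a rooted connected subgraph of $\N$. I will show that the partial coloring produced by the Network Coloring Procedure is consistent with $\C^*$ in the sense that, whenever the procedure colors a vertex with color $i$, that vertex belongs to $\C^*_i$. This will immediately give well-definedness of the first step (no vertex gets two colors), success of the second step (at least one branch of each simple branching subpath is eligible to be colored), and connectivity of each color class.

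First I would establish a general property of cactus networks: every common ancestor of a set $V$ is $\succcurlyeq$ the (unique, by Lemma~\ref{lem:lca}) lowest common ancestor $r$ of $V$. If a common ancestor $u$ were incomparable with $r$, letting $w$ be the lowest common ancestor of $\{u,r\}$ and picking any $v \in V$, the branching path from $w$ to $v$ (Lemma~\ref{lem:bpath}) would force $u$ and $r$ onto distinct branches of a common simple branching subpath $s \rightrightarrows t$; since this same simple branching subpath appears on the branching path from $w$ to every $v \in V$, the sink $t$ would be a common ancestor of $V$ strictly below $r$, contradicting the minimality of $r$.

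Next I would show $r^{(i)} \in \C^*_i$. The induced subgraph $\C^*_i$ inherits edge-disjointness of cycles from $\N$ and is itself a cactus. The key point is a cut-vertex argument: $r^{(i)}$ has at least two distinct out-neighbors that together distribute the leaves of $L_i$ (otherwise a single child would be a common ancestor of $L_i$ below $r^{(i)}$, contradicting LCA minimality), and analysis of $\N\setminus\{r^{(i)}\}$ shows that no rooted connected subgraph of $\N$ containing $L_i$ can avoid $r^{(i)}$. I expect this step to be the main technical obstacle, requiring careful inspection of the cactus structure around $r^{(i)}$.

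Once $r^{(i)} \in \C^*_i$ is in hand, the remainder is routine. For any vertex $v$ on a unique subpath of the maximal branching path from $r^{(i)}$ to some $l \in L_i$, the directed path from $r^{(i)}$ to $l$ inside $\C^*_i$ is a path in $\N$ and must traverse $v$ by the unique-subpath definition together with Lemma~\ref{lem:nosinks}; hence $v \in \C^*_i$, which shows the first step of the procedure is consistent with $\C^*$ and therefore conflict-free. For the second step, the source $s$ and sink $t$ of each simple branching subpath lie in $\C^*_i$ by the preceding argument, so connectivity of $\C^*_i$ supplies an $s$-to-$t$ path inside $\C^*_i$; by Lemma~\ref{lem:nosinks} this path coincides with one of the two branches, which is therefore entirely contained in $\C^*_i$, hence free of any other color, and so becomes eligible to be colored $i$. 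Connectivity of the resulting color class then follows because the union of all unique subpaths of the maximal branching path from $r^{(i)}$ to $L_i$ together with at least one branch per simple branching subpath forms a connected subgraph rooted at $r^{(i)}$.
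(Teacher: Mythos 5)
Your overall strategy is the paper's: fix a rooted-and-connected expansion $\C^*$ of the leaf coloring and argue that the coloring procedure is consistent with it. But the consistency claim you build everything on --- that every vertex the procedure colors with $i$ must lie in $\C^*_i$ --- is false for an arbitrary expansion $\C^*$, and the step where you derive it is where the argument breaks. For unique-subpath vertices you invoke ``the directed path from $r^{(i)}$ to $l$ inside $\C^*_i$,'' but $\C^*_i$ is only guaranteed to be rooted at some common ancestor $\rho_i\succcurlyeq r^{(i)}$; it need not contain any directed path from $r^{(i)}$ to $l$ at all. Concretely, take the cactus with edges $s\to a\to b\to z\to t$, $s\to c\to t$, a leaf $l''$ under $b$ and a leaf $l$ under $t$, both of color $i$. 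The lowest common ancestor of $\{l,l''\}$ is $b$, the maximal branching path from $b$ to $l$ is the unique path $b\to z\to t\to l$, so the first step colors $z$; yet $\C^*_i=\{s,a,b,l'',c,t,l\}$ is a valid rooted connected expansion that omits $z$ and contains no directed $b$-to-$l$ path. So ``the first-step coloring is a subcoloring of $\C^*$'' simply does not hold for every admissible $\C^*$ (the paper's own one-line justification of this point is equally loose, but your write-up commits to the false intermediate statement explicitly and derives the rest from it). What the no-conflict conclusion actually needs is the weaker assertion that a vertex colored $i$ at the first step cannot belong to $\C^*_j$ for any $j\ne i$, and that requires an argument other than membership in $\C^*_i$.

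In addition, you explicitly leave the claim $r^{(i)}\in\C^*_i$ as ``the main technical obstacle,'' offering only a gesture at analyzing $\N\setminus\{r^{(i)}\}$. The claim is in fact true, and the clean route is the same device as your first auxiliary claim: if $r^{(i)}\notin\C^*_i$, then for every $l'\in L_i$ the path inside $\C^*_i$ from $\rho_i$ to $l'$ avoids $r^{(i)}$, so $r^{(i)}$ lies strictly inside one branch of a simple branching subpath $s\rightrightarrows t$ of the maximal branching path from $\rho_i$ to $l'$; since a non-source vertex belongs to at most one simple branching path, it is the same $s\rightrightarrows t$ for every $l'$, whence $t$ is a common ancestor of $L_i$ strictly below $r^{(i)}$, contradicting the choice of $r^{(i)}$. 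As submitted, however, the proposal both leaves that step open and rests on an intermediate claim that fails on the example above, so it does not yet constitute a proof.
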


\begin{proof}
 By the definition of convexity, a given convex $k$-coloring of the leaves of $\N$ can be expanded to a partial $k$-coloring $\mathcal{C}$ of $\N$ such that the induced subgraphs of each color are connected. The partial coloring of $\N$ that we obtain at the first step is a subcoloring of $\mathcal{C}$.
Indeed, since the induced subgraphs of each color in $\mathcal{C}$ are connected, the unique subpaths of the maximal branching paths are colored (at the first step) into the same color as in $\mathcal{C}$. Hence, no conflicting colors can be imposed at the first step.

On the second step, we color a branch in some color $i$ only if corresponding source and sink are colored in color $i$, so there are no conflicts on the second step. By Lemma~\ref{lem:nosinks}, each non-source vertex belongs to at most one simple branching path, and thus the second step and the whole coloring procedure are well-defined.

For any simple branching subpath $s^{(i)}_j  \rightrightarrows t^{(i)}_j$, at least one branch, say $b$, is colored into $i$ in $\mathcal{C}$. 
In the subcoloring of $\mathcal{C}$ obtained at the first step, the branch $b$ cannot contain any colors besides $i$.
Hence, we will color all vertices of $b$ into $i$ at the second step. That is, at least one branch of every simple branching subpath will be colored, implying that the induced subgraph of each color is connected.
\qed
\end{proof}

Similarly to the case of trees, we compute $p(\N)$ as the number of minimal colorings of a cactus network $\N$.

Let $\N'$ be any rooted network larger than $\N$ such that $\N$ is a rooted subnetwork of $\N'$ and all edges from 
$\N' \setminus \N$ to $\N$ end at the root of $\N$. 
We call a $k$-coloring of $\N$ \emph{semiminimal} if this coloring is induced by some minimal coloring on $\N'$.
Similarly, to the case of trees (Lemma~\ref{lem:semiminimal}), a semiminimal $k$-coloring of $\N$ does not depend on the topology of $\N' \setminus \N$ and thus is well-defined.

For each vertex $v$ in $\N$, we define a subnetwork $\N_v$ of $\N$ rooted at $v$ and containing all descendants of $v$.
An internal vertex in $\N$ is \emph{regular} if the subnetworks rooted at its children are pairwise vertex-disjoint.
It is easy to see that sources in $\N$ are not regular.

For each vertex $v$ in $\N$, we define the following quantities:
\begin{itemize}
 \item $f_k(v)$ is the number of minimal $k$-colorings of $\N_v$ such that $v$ is colored (\emph{$f$-type coloring});
 \item $g_k(v)$ is the number of semiminimal but not minimal $k$-colorings of $\N_v$ (\emph{$g$-type coloring});
 \item $h_k(v)$ is the number of minimal $k$-colorings of $\N_v$ such that the vertex $v$ is not colored (\emph{$h$-type coloring}).
\end{itemize}
Our goal is to compute $p_k(\N)=f_k(r)+h_k(r)$ for each positive integer $k$, where $r$ is the root of $\N$.
As before, for a leaf $v$ of $\N$, we have $f_k(v)=\delta_{k,1}$ and $g_k(v)=h_k(v)=0$ for any $k\geq 1$.
We define the generating function $F_v(x)$, $G_v(x)$, and $H_v(x)$ as in \eqref{eq:FGH}. Whenever we compute these functions 
in a subnetwork $\mathcal{M}$ of $\N$, we refer to them as $F^\mathcal{M}_v(x)$, $G^\mathcal{M}_v(x)$, and $H^\mathcal{M}_v(x)$.

It is easy to see that Theorem~\ref{thm:trees} holds for all regular vertices $v$ of $\N$ and 
therefore gives us a way to compute $F_v(x)$, $G_v(x)$, 
and $H_v(x)$, provided that these functions are already computed at the children of $v$.
So it remains to describe how to compute these functions at the sources in $\N$.

Let $s$ be a source in $\N$ and $t$ be any sink corresponding to $s$. 
We define $p_l$ and $p_r$ be the parents of $t$. 
To obtain formulas for $F_s(x), G_s(x)$, and $H_s(x)$, 
we consider the auxiliary subnetworks $\N_s$, $\N_t$, $\N_{s\setminus t}=\N_s\setminus\N_t$, $\mathcal{L}$, $\mathcal{R}$, and $\mathcal{B}$ (Fig.~\ref{fig:auxgraph}), where
\begin{itemize}
 \item $\mathcal{L}$ is the subnetwork obtained from $\N_s$ by removing the edge $(p_l,t)$;
 \item $\mathcal{R}$ is the subnetwork obtained from $\N_s$ by removing the edge $(p_r,t)$;
 \item $\mathcal{B}$ is the subnetwork obtained from $\N_s$ by removing all the edges in the simple branching path $s \rightrightarrows t$.
\end{itemize}
 \begin{figure}[!t]
   \begin{center}
   \includegraphics[width = \textwidth]{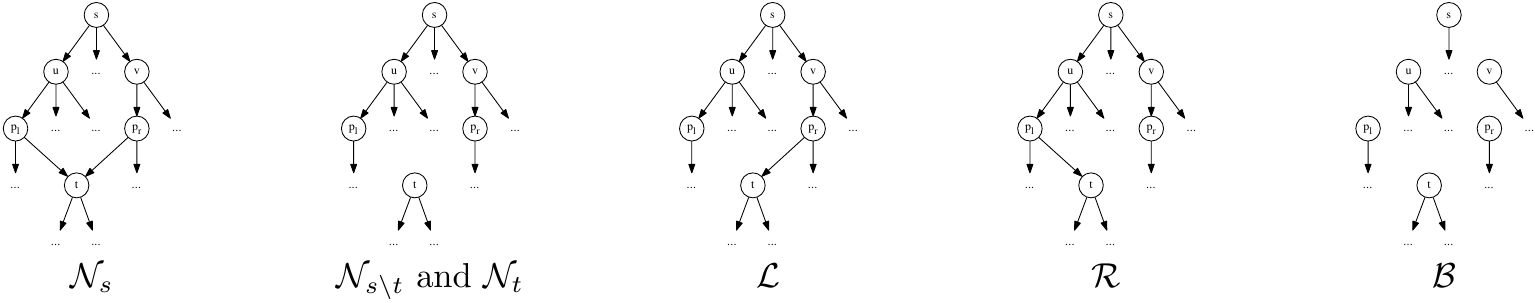}
   \caption{Subnetworks $\N_s$, $\N_{s\setminus t}$, $\N_t$, $\mathcal{L}$, $\mathcal{R}$, and $\mathcal{B}$.}
   \label{fig:auxgraph}
   \end{center}
 \end{figure}
It is easy to see that the vertex sets of the subnetworks $\N_s$, $\mathcal{L}$, $\mathcal{R}$, $\N_t\cup \N_{s\setminus t}$, and $\mathcal{B}$ coincide, and therefore a partial coloring of one subnetwork translates to the others. 

\begin{lemma}\label{lem:colLR}
Let $t$ be a sink in a cactus network $\N$ and $s$ be the corresponding source. 
If $\C$ is a partial coloring of $\N_s$ of $f$-type, $g$-type, or $h$-type, then $\C$ contains a partial subcoloring of $\L$ or $\R$ of the same type.
\end{lemma}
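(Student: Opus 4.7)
The plan is to argue by case analysis on how $\C$ assigns colors to the two branches $B_l$ (through $p_l$) and $B_r$ (through $p_r$) of the simple branching path $s\rightrightarrows t$. The key observation from the network coloring procedure is that step~2 colors a branch entirely in some color $i$ only if both $s$ and $t$ are colored $i$ at step~1, since the branch must belong to a maximal branching path of color $i$. Hence whether the edges $(p_l,t)$ and $(p_r,t)$ matter for the validity of $\C$ depends on whether $s$ and $t$ share a color in $\C$.

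First I would dispose of the case where $s$ and $t$ do not share a color under $\C$. Then neither $B_l$ nor $B_r$ is uniformly colored by step~2, so the edges $(p_l,t)$ and $(p_r,t)$ play no role in closing any color class. Viewing $\C$ as a partial coloring of $\L$ (equivalently, $\R$) then yields exactly the same vertex-color assignments, which remains of the same type since the coloring status of $s$ is unchanged.

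Next, suppose $s$ and $t$ share some color $i$ in $\C$. By Lemma~\ref{lem:minC} at least one of $B_l, B_r$ is entirely colored $i$. If only $B_l$ is entirely colored $i$, I would view $\C$ as a coloring of $\R$: removing $(p_r,t)$ keeps $t$ connected to $s$ through $B_l$ together with edge $(p_l,t)$, while the partially colored $B_r$ is unaffected. Symmetrically, if only $B_r$ is entirely colored $i$, view $\C$ on $\L$. If both branches are entirely colored $i$, either choice works; taking $\L$, after removing $(p_l,t)$ the class of color $i$ is still connected because $t$ reaches $s$ via $B_r$, and $B_l$ remains attached to $s$ along $B_l$. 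In every subcase the colored/uncolored status of $s$ under $\C$ is unchanged, so the $f$-, $g$-, or $h$-type is preserved.

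The main obstacle I foresee is the last subcase, where both branches are colored $i$: one must verify that removing $(p_l,t)$ does not render any internal vertex of $B_l$ redundant, so that the coloring remains genuinely minimal (respectively, semiminimal) in $\L$. Here I would invoke Lemma~\ref{lem:nosinks}: each internal vertex of $B_l$ lies on no other simple branching path, so its off-branch descendant structure is unchanged in $\L$, and each such vertex is still necessary for connecting $s$ to its colored off-branch descendants. For the $g$-type case, the semiminimal extension to an enclosing network $\N'$ persists because the deleted edge lies strictly inside $\N_s$ and is disjoint from $\N'\setminus\N_s$.
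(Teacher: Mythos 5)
Your overall strategy (a case analysis around the simple branching path $s\rightrightarrows t$, invoking Lemma~\ref{lem:minC} to get a fully colored branch when $s$ and $t$ share a color) is close in spirit to the paper's, which instead cases on whether each of the edges $(p_l,t)$ and $(p_r,t)$ is \emph{used}, i.e., has both endpoints of the same color. However, your first case contains a genuine gap. When $s$ and $t$ do not share a color, it does not follow that the edges $(p_l,t)$ and $(p_r,t)$ ``play no role in closing any color class,'' and it is not true that $\C$ can be read on $\L$ and $\R$ interchangeably. The \emph{first} step of the coloring procedure (not only the second) can give $p_l$ and $t$ a common color: take a color class $i$ whose lowest common ancestor $r^{(i)}$ lies strictly inside the branch through $p_l$ --- say a vertex $a$ on that branch with a leaf child $c$, and $L_i=\{c,\ell\}$ for a leaf $\ell$ below $t$. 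The unique path $a\rightarrow p_l\rightarrow t\rightarrow\ell$ is then colored $i$ at step one, so the edge $(p_l,t)$ is used, yet $s$ need not be colored at all (the coloring can be of $h$-type). Deleting $(p_l,t)$ disconnects class $i$ --- in $\L$ the lowest common ancestor of $\{c,\ell\}$ jumps up to $s$ --- so $\C$ contains no same-type subcoloring of $\L$; only $\R$ works here. Your argument, which concludes that either network works in this case, therefore fails to establish even the existence claim.

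The repair is essentially the paper's dichotomy: if $\C$ does not use $(p_l,t)$ it is a same-type coloring of $\L$; if it does not use $(p_r,t)$ it is a same-type coloring of $\R$; and only when both edges are used (which forces $s$ to be colored, ruling out $h$-type) does one need your branch analysis. There the paper separates $f$-type (both $\L$ and $\R$ work, since every vertex $v$ with $s\succcurlyeq v\succ t$ is colored) from $g$-type (choose the side whose branch carries no step-one vertex of the shared color). Your closing remark about redundant interior vertices of a fully colored branch also needs adjustment: such vertices may well become redundant in $\L$, since they need not have colored off-branch descendants; this is harmless only because the lemma asks for a partial \emph{subcoloring} of $\C$, so one simply uncolors them rather than arguing they remain necessary.
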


\begin{proof}
 We say that a partial coloring of $\N$ \emph{uses} an edge $(u,v)$ if vertices $u$ and $v$ are colored into the same color. Let $p_l$ and $p_r$ be the parents of $t$. 
Note that if $\C$ does not use the edge $(p_l,t)$ then it is a partial coloring of the same type on $\L$. 
Similarly, if $\C$ does not use the edge $(p_r,t)$ then it is a partial coloring of the same type on $\R$. So, it remains to consider the case when $\C$ uses both edges $(p_l,t)$ and $(p_r,t)$. 

Suppose that $\C$ uses both edges $(p_l,t)$ and $(p_r,t)$. 
Notice that such $\C$ cannot be of $h$-type (since $p_r$ and $p_l$ share the same color, their lowest common ancestor $s$ has to be colored as well). 
So, $\C$ has $f$-type or $g$-type. Let $i$ be the color of $t$. From the second step of the coloring procedure, it follows that each vertex $v$ such that $s\succcurlyeq v\succ t$ is also colored into $i$. Hence, removal of one of the edges $(p_l,t)$ and $(p_r,t)$ does not break the connectivity of the induced subgraph of color $i$. 
Thus, if $\C$ has $f$-type, then it contains a partial subcoloring of both $\L$ and $\R$ of $f$-type.
Now suppose that $\C$ has $g$-type and $\C'$ is a subcoloring of $\C$ constructed at the first step of the network coloring procedure. Then at least one branch in $s \rightrightarrows t$ does not contain vertices of color $i$ in $\C'$ (otherwise $\C$ would have $f$-type). If this branch contains $p_l$ then $\C$ contains a partial subcoloring of $\L$ of $g$-type; otherwise $\C$ contains a partial subcoloring of $\R$ of $g$-type.
\qed
\end{proof}

\begin{theorem}\label{thm:networks}
 Let $s$ be a source in $\N$ and $t$ be any sink corresponding to $s$. Then
 \begin{small}
 \begin{align}
  H_s(x) &= H^{\mathcal{L}}_s(x)+ H^{\mathcal{R}}_s(x) - H^{\N_{s\setminus t}}_s(x)\cdot (F_{t}(x) + H_{t}(x)); \label{eq:Hs}\\
  G_s(x) &= G^{\mathcal{L}}_s(x)+ G^{\mathcal{R}}_s(x) \nonumber\\
  &- G^{\N_{s\setminus t}}_s(x)\cdot (F_{t}(x) + H_{t}(x)) - (F_{t}(x) + G_{t}(x))\cdot\prod_{v:\ s\succcurlyeq v \succ t} H^{\mathcal{B}}_v(x); \label{eq:Gs}\\
  F_s(x) &= F^{\mathcal{L}}_s(x)+ F^{\mathcal{R}}_s(x) - F^{\N_{s\setminus t}}_s(x)\cdot (F_{t}(x) + H_{t}(x)) \nonumber \\ 
  &- (F_{t}(x) + G_{t}(x))\cdot \left(\prod_{v:\ s \succcurlyeq v\succ t} \left(\frac{F^{\mathcal{B}}_v(x)+G^{\mathcal{B}}_v(x)}{x}+H^{\mathcal{B}}_v(x)\right)-\prod_{v:\ s\succcurlyeq v\succ t} H^{\mathcal{B}}_v(x)\right) \label{eq:Fs}  
\end{align} 
\end{small}
under the following convention: if a non-leaf vertex $v$ in $\N$ turns into a leaf 
in a network $\N'\in\{\N_{s\setminus t}, \mathcal{L}, \mathcal{R}, \mathcal{B}\}$, then we re-define
$F_v^{\N'}(x)=G_v^{\N'}(x)=0$ and $H_v^{\N'}(x)=1$.
\end{theorem}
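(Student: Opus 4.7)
Plan: The proof proceeds by inclusion-exclusion, classifying $f$-, $g$-, or $h$-type colorings of $\N_s$ by whether each of the edges $(p_l, t)$ and $(p_r, t)$ is ``used'' (has same-color endpoints). This yields four cases: (a) neither, (b) only $(p_l, t)$, (c) only $(p_r, t)$, (d) both. By Lemma~\ref{lem:colLR}, a coloring of $\N_s$ not using $(p_l, t)$ corresponds to a subcoloring of $\L$ of the same type, and symmetrically for $\R$; a coloring using both edges yields subcolorings of both $\L$ and $\R$ of the same type. Thus cases (a) and (d) are each double-counted in $X^\L_s + X^\R_s$ (for $X \in \{F, G, H\}$), while (b) and (c) are counted once each, producing the schematic identity $X_s = X^\L_s + X^\R_s - [\text{case (a)}] - [\text{case (d)}]$.

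First I would establish \eqref{eq:Hs}. Case (d) is precluded for $h$-type because both edges being used forces $s$ to be colored by convexity (the same argument that appears in the proof of Lemma~\ref{lem:colLR}). The case (a) contribution factors as $H^{\N_{s\setminus t}}_s(x) \cdot (F_t(x) + H_t(x))$, since an $h$-type coloring of $\N_s$ using neither edge decouples into an $h$-type coloring of $\N_{s\setminus t}$ and an independent minimal coloring of $\N_t$. This yields \eqref{eq:Hs} at once.

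For \eqref{eq:Gs} and \eqref{eq:Fs}, the same argument gives the $X^{\N_{s\setminus t}}_s(F_t + H_t)$ case (a) correction. Case (d) now contributes and splits into $f$- and $g$-sub-cases on $\N_s$ according to whether the $\N_s$-procedure colors $s$: this happens iff $r^{(i)} = s$ in $\N_s$, equivalently iff some branching-path vertex has a non-trivial $\B$-subtree providing color $i$ at that vertex. If no such contribution exists (every branching-path vertex $v$ has $v$ uncolored in $\B$), then $r^{(i)} = t$ on $\N_s$, the procedure leaves $s$ uncolored, and the case (d) coloring is only semiminimal via a parental extension, hence $g$-type on $\N_s$; otherwise it is $f$-type. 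Enumerating: the $\N_t$-side contributes $F_t + G_t$ (i.e., $t$ colored, minimal or not), while each vertex $v$ with $s\succcurlyeq v \succ t$ contributes either $(F^\B_v + G^\B_v)/x$ (if $v$ is colored in $\B$, the $1/x$ identifying its color with the shared $i$) or $H^\B_v$ (if $v$ is uncolored in $\B$). The $g$-sub-case is precisely the all-$H$ term $(F_t + G_t)\prod_v H^\B_v$, and the $f$-sub-case is the complement $(F_t + G_t)\bigl[\prod_v((F^\B_v + G^\B_v)/x + H^\B_v) - \prod_v H^\B_v\bigr]$. Subtracting these overcounts once each from $G^\L_s + G^\R_s$ and $F^\L_s + F^\R_s$ respectively yields \eqref{eq:Gs} and \eqref{eq:Fs}.

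The hard part will be rigorously justifying the case (d) classification — specifically, that the $f$- vs.\ $g$-split on $\N_s$ is captured exactly by whether some branching-path vertex is colored in $\B$, and that each case (d) coloring restricts under the cut to a valid subcoloring of both $\L$ and $\R$ of matching type. This requires a careful analysis of how the minimal coloring procedure's step 1 and step 2 interact when a branching edge is excised, together with the convention $F_v = G_v = 0$, $H_v = 1$ for vertices that become leaves in $\L, \R, \B$, which ensures that a severed branch is recorded as an ``uncolored'' contribution in the product and that the $1/x$ sharing properly identifies colors across the branching path.
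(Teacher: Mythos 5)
Your proposal follows essentially the same route as the paper's proof: inclusion–exclusion over $\L$ and $\R$ driven by which of the edges $(p_l,t)$, $(p_r,t)$ are ``used,'' with the neither-edge case factoring as $X^{\N_{s\setminus t}}_s(x)\,(F_t(x)+H_t(x))$, the both-edges case decomposing over $\N_t$ and the components of $\B$, and the $f$-versus-$g$ split there determined by whether some branching-path vertex receives the shared color from its own subnetwork — exactly the paper's analysis via the step-1 subcoloring $\C'$ and Lemma~\ref{lem:colLR}. The formulas and the role of the leaf convention are identified correctly, so the proposal is sound and matches the paper's argument.
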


\begin{proof}
We say that a partial coloring of $\N$ \emph{uses} an edge $(u,v)$ if vertices $u$ and $v$ are colored into the same color. 
Let $p_l$ and $p_r$ be the parents of $t$.

 
 Let us enumerate $h$-type colorings of $\N_s$ first. We remark that such coloring cannot use both edges $(p_l,t)$ and $(p_r,t)$ (if it uses both these edges, the source $s$ would be colored by the definition of minimal coloring). That is, any $h$-type coloring of $\N_s$ represents an $h$-type coloring of $\L$ or $\R$, or both these networks. The number of $h$-type $k$-colorings of $\L$ and $\R$ is the coefficient of $x^{k}$ in $H^{\L}_s(x)$ and $H^{\R}_s(x)$, respectively. 
 By the inclusion-exclusion principle, the number of $h$-type $k$-colorings of $\N_s$ equals the sum of those of $\L$ or $\R$ minus the number of $h$-type $k$-colorings of both $\L$ and $\R$. A coloring of the last kind does not use either of the edges $(p_l,t)$ and $(p_r,t)$, and thus is formed by an $h$-type coloring of $\N_{s\setminus t}$ and a minimal coloring of $\N_t$ (the colors of the two colorings are disjoint). The number of such coloring pairs equals the coefficient of $x^{k}$ in $ H^{\N_{s\setminus t}}_s(x)(F_{t}(x) + H_{t}(x))$, which completes the proof of \eqref{eq:Hs}.
 
 We use similar reasoning to prove \eqref{eq:Gs} and \eqref{eq:Fs}. The first two terms in these formulas are similar to those in \eqref{eq:Hs} that correspond to same-type colorings on $\L$ or $\R$ (at least one of which always exists by Lemma~\ref{lem:colLR}). The case of colorings of $g$-type or $f$-type on both $\L$ and $\R$ is more complicated and is split into two subcases depending on whether none or both of the edges $(p_l,t)$ and $(p_r,t)$ are used 
(if exactly one of the edges is used, it cannot be removed without making the induced subgraph of this color disconnected). The subcase of using none of the edges is similar to $h$-type colorings 
and gives us the third term in formulas \eqref{eq:Gs} and \eqref{eq:Fs}. So it remains to enumerate colorings on both $\L$ and $\R$ that use both edges $(p_l,t)$ and $(p_r,t)$.
 
Let $\C$ be a $g$-type $k$-coloring of $\N_s$ that is a coloring on both $\L$ and $\R$ and uses both edges $(p_l,t)$ and $(p_r,t)$. 
Let $\C'$ be a subcoloring of $\C$ constructed at the first step of the network coloring procedure. 
Vertices $s$ and $t$ have the same color in both $\C$ and $\C'$, but any vertex $v$ with $s \succ v \succ t$ is colored in $\C$ but not in $\C'$. So $\C$ corresponds to a semiminimal coloring on $\N_t$ with a colored root $t$ (i.e., of $f$-type or $g$-type) and 
a coloring on $\mathcal{B}\setminus\N_t$ such that vertices $v$ with $s \succcurlyeq v \succ t$ (in $\N_s$) are not colored. 
Since $\B$ is the union of vertex-disjoint subnetworks $\N_v$ with $s \succcurlyeq v \succcurlyeq t$, the number of such coloring pairs equals the coefficient of $x^{k}$ in
 $$(F_{t}(x) + G_{t}(x))\cdot \prod_{v:\ s\succcurlyeq v\succ t} H^{\mathcal{B}}_v(x).$$
 
Now, let $\C$ be an $f$-type $k$-coloring of $\N_s$ that is a coloring on both $\L$ and $\R$ and uses both edges $(p_l,t)$ and $(p_r,t)$.  
 Let $\C'$ be a subcoloring of $\C$ constructed at the first step of the network coloring procedure. 
 Vertices $s$ and  $t$ have the same color in both $\C$ and $\C'$, and any vertex $v$ with $s \succ v \succ t$ is either colored into the same color or not colored in $\C'$. So $\C$ corresponds to a semiminimal coloring on $\N_t$ with the colored root $t$ (i.e., of $f$-type or $g$-type) and a coloring on $\B\setminus\N_t$ such that at least one vertex $v$ with $s \succcurlyeq v \succ t$ (in $\N_s$) is colored into the same color. The number of such colorings is the coefficient of $x^{k}$ in
 $$(F_{t}(x) + G_{t}(x))\cdot \left(\prod_{v:\ s\succcurlyeq v\succ t} \left(\frac{F^{\mathcal{B}}_v(x)+G^{\mathcal{B}}_v(x)}{x}+H^{\mathcal{B}}_v(x)\right)-\prod_{v:\ s\succcurlyeq v\succ t} H^{\mathcal{B}}_v(x)\right).$$
 The first product in the parentheses represents the generating function for the number of colorings of $\B\setminus\N_t$,
 where each vertex $v$ with $s \succcurlyeq v \succ t$ (in $\N_s$) is either colored 
 in a reserved color (accounted by the term $\frac{F^{\mathcal{B}}_v(x)+G^{\mathcal{B}}_v(x)}{x}$) 
 or not colored (accounted by the term $H^{\mathcal{B}}_v(x)$). 
 Subtraction of the second product eliminates the case where no vertex $v$ with $s \succcurlyeq v \succ t$ (in $\N_s$) is colored.
\qed
\end{proof}

\section{Algorithm for Computing $p(\N)$}

Theorem~\ref{thm:trees} (for regular vertices) and Theorem~\ref{thm:networks} (for sources) allow us to compute the generating functions $F,G,H$ at the root $r$ of a cactus network $\N$ recursively. Namely, to compute $F_v(x)$, $G_v(x)$, and $H_v(x)$ for a vertex $v$ (starting at $v=r$), we proceed as follows:
\begin{itemize}
\item
if $v$ is a leaf, then $F_v(x)=x$, $G_v(x)=H_v(x)=0$ (except for the special case of a newly formed leaf described in Theorem~\ref{thm:networks}, when $F_v(x)=G_v(x)=0$ and $H_v(x)=1$);
\item
if $v$ is regular, we recursively proceed with computing $F_u(x)$, $G_u(x)$, and $H_u(x)$ for every child $u$ of $v$, and then combine the results with formulae \eqref{eq:H}, \eqref{eq:G}, \eqref{eq:F};
\item if $v$ is a source, we select any sink $t$ corresponding to $s$, and apply Theorem~\ref{thm:networks} to compute $F_v(x)$, $G_v(x)$, and $H_v(x)$ from the generating functions computed in smaller subnetworks. We remark that while $s$ may be a source for more than one sink and thus may still remain a source in the subnetworks, the number of sinks in each of the subnetworks decreases as compared to $\N$, implying that our recursion sooner or later will turn $s$ into a regular vertex and then recursively proceed down to its children.
\end{itemize}

From the generating functions at the root $r$ of $\N$, we can easily obtain the number $p_k(\N)$ of convex $k$-colorings of $\N$
as the coefficient of $x^k$ in $F_r(x)+H_r(x)$. This further implies that $p(\N)$ can be computed as
$$p(\N) = \sum_{k=0}^\infty p_k(\N) = F_r(1)+H_r(1).$$

In the Appendix, we provide a SAGE~\cite{sage} implementation of the described algorithm for computing functions $F_v(x)$, $G_v(x)$, and $H_v(x)$ for a given cactus network $N$ and its vertex $v$.

\section{Applications}

\paragraph{Network Specificity.}
We propose to measure the specificity of a cactus network $\N$ with $n$ leaves as a decreasing function of $p(\N)$.
Notice that the value of $p(\N)$ can be as small as $2^n-n$ (for a tree with $n$ leaves all being children of the root) 
and as large as the Bell number $B_n$ (enumerating set partitions of the leaves). We therefore find it convenient to define the \emph{specificity score} of $\N$ as
$$\tau(\N) = \frac{n}{\log_2(p(\N)+n)}.$$
In particular, we always have $0<\tau(\N)\leq 1$, where the upper bound is achievable. The asymptotic of $B_n$ further implies that $\tau(\N)$ can be asymptotically as low as $\frac{\log 2}{\log n}$, which vanishes as $n$ grows.

From Theorem~\ref{th:fib}, it can be easily seen that for a binary tree $\T$ with $n$ leaves, 
we have $\tau(\T)\approx \frac{n}{\log_2 \phi^{2n-1}} \approx 0.72$ when $n$ is large, where $\phi=\frac{1+\sqrt{5}}2$ is the golden ratio.

\paragraph{Network Comparison.}
Existing methods for construction of phylogenetic networks (e.g., hybridization networks from a given set of gene trees~\cite{wu2013,ulyantsev2015}) often rely on the parsimony assumption and attempt
to minimize the number of reticulate events. 
Such methods may generate multiple equally parsimonious networks, which will then need to be evaluated and compared from a different perspective.
It is equally important to compare phylogenetic networks constructed by different methods.
If the number of reticulate events in a constructed network is small, it is quite likely that this network represents a cactus network.
Furthermore, there exist methods that explicitly construct phylogenetic cactus networks~\cite{brandes2009}.
This makes our method well applicable for evaluation and comparison of such networks in terms of their specificity as defined above.



\paragraph{Orientation of Undirected Networks.}

Some researchers consider undirected phylogenetic networks (called ``abstract'' in the survey~\cite{huson2011survey}) that describe evolutionary relationship of multiple species but do not correlate their evolution with time.
For a given undirected cactus network $\N'$, our method allows one to find a root and an orientation of $\N'$, i.e., a directed rooted network $\N$ with $\N^\star=\N'$, that maximizes the specificity score $\tau(\N)$. Indeed, different orientations of the same undirected network may result in different scores even if they are rooted at the same vertex. For example, in Fig.~\ref{fig:orient} the network $\N_1$ 
 has $p(\N_1)=35$ convex colorings and the score $\tau(\N_1)\approx 0.94$, while the network $\N_2$ has $p(\N_2) = 37$ convex colorings and the score $\tau(\N_2) \approx 0.927$.
 \begin{figure}[!t]
   \begin{center}
   \includegraphics[width = \textwidth]{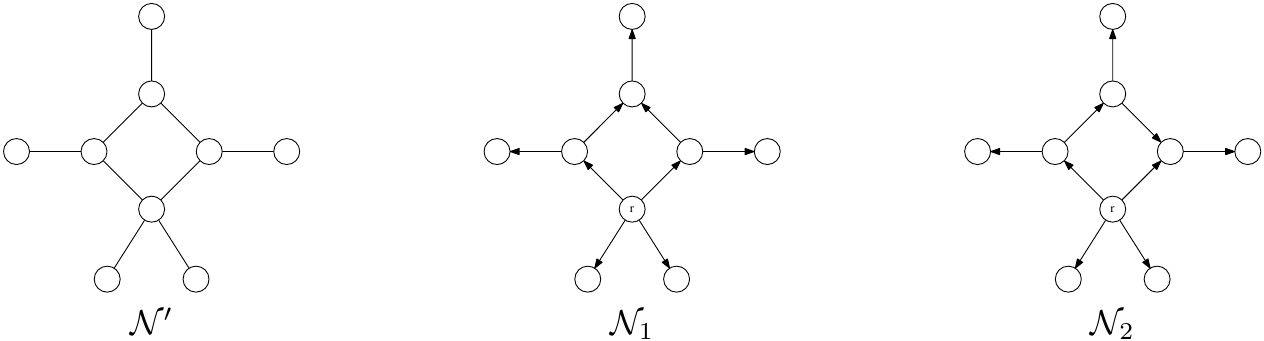}
   \caption{Networks $\N_1$ and $\N_2$ represent different orientations of the same undirected network $\N'$, i.e., $\N_1^\star=\N_2^\star=\N'$.}
   \label{fig:orient}
   \end{center}
 \end{figure}

\bibliographystyle{splncs}
\bibliography{phylo.bib} 

\begin{thebibliography}{10}

\bibitem{holland2004}
Holland, B.R., Huber, K.T., Moulton, V., Lockhart, P.J.:
\newblock Using consensus networks to visualize contradictory evidence for
  species phylogeny.
\newblock {Molecular Biology and Evolution} \textbf{21}(7) (2004)  1459--1461

\bibitem{huson2011survey}
Huson, D.H., Scornavacca, C.:
\newblock A survey of combinatorial methods for phylogenetic networks.
\newblock Genome Biology and Evolution \textbf{3} (2011)  23--35

\bibitem{ulyantsev2015}
Ulyantsev, V., Melnik, M.:
\newblock Constructing parsimonious hybridization networks from multiple
  phylogenetic trees using a sat-solver.
\newblock In Dediu, A.H.,  et~al., eds.: Proceedings of the 2nd International
  Conference on Algorithms for Computational Biology (AlCoB). Volume 9199 of
  Lecture Notes in Computer Science. (2015)  141--153

\bibitem{wu2013}
Wu, Y.:
\newblock An algorithm for constructing parsimonious hybridization networks
  with multiple phylogenetic trees.
\newblock Journal of Computational Biology \textbf{20}(10) (2013)  792--804

\bibitem{steel1992}
Steel, M.:
\newblock The complexity of reconstructing trees from qualitative characters
  and subtrees.
\newblock Journal of Classification \textbf{9}(1) (1992)  91--116

\bibitem{kelk2015}
{Kelk}, S.:
\newblock {A note on convex characters and Fibonacci numbers}.
\newblock arXiv eprint (2015) arXiv:1508.02598.

\bibitem{korneyenko1994}
Korneyenko, N.:
\newblock Combinatorial algorithms on a class of graphs.
\newblock Discrete Applied Mathematics \textbf{54}(2–3) (1994)  215--217

\bibitem{brandes2009}
Brandes, U., Cornelsen, S.:
\newblock Phylogenetic graph models beyond trees.
\newblock Discrete Applied Mathematics \textbf{157}(10) (2009)  2361--2369

\bibitem{Huson2009}
Huson, D.H., Rupp, R., Berry, V., Gambette, P., Paul, C.:
\newblock Computing galled networks from real data.
\newblock Bioinformatics \textbf{25}(12) (2009)  i85--i93

\bibitem{semple2002tree}
Semple, C., Steel, M.:
\newblock Tree reconstruction from multi-state characters.
\newblock Advances in Applied Mathematics \textbf{28}(2) (2002)  169--184

\bibitem{sage}
{The Sage Developers}:
\newblock {Sage Mathematics Software (Version 7.0)}. (2016) {\tt
  http://www.sagemath.org}.

\end{thebibliography}

\clearpage
\newpage
\section*{Appendix. SAGE Code}

Below we provide a SAGE code for the function $\texttt{FGH}(N,v)$, 
which computes the triple of functions\\
$\left[F_v(x),G_v(x),H_v(x)\right]$ for a given cactus network $N$ and its vertex $v$.

\begin{footnotesize}
\begin{verbatim}
# Function FGH( N, v, true_leaves )
# Input:
#   N is a cactus network, possibly a subnetwork of the initial network
#   v is a vertex in N
#   true_leaves is an optional parameter, used internally,
#               set of vertices that are leaves in the initial network;
#               should be omitted when N is the initial network
# Output: [ F_v(x), G_v(x), H_v(x) ]

from itertools import combinations

def FGH( N, v, true_leaves = None ):

  # test if N is a DAG
  if not N.is_directed_acyclic(): 
    raise ValueError, "Input graph is not a network!"

  if true_leaves is None:
    # N is a parent network, so we initialize true_leaves: 
    true_leaves = Set(N.sinks())

  # define g.f. variable x
  x = QQ['x'].0

  if N.out_degree(v) == 0:
    # v is a leaf
    if v in true_leaves:
      return [x,0,0]
    else:
      return [0,0,1]

  P = Poset(N)

  # look for a sink in N_v if it exists
  t = v
  S = [ Set(P.principal_upper_set(u)) for u in N.neighbors_out(v) ]
  for u in combinations(S,2):
    T = u[0].intersection(u[1])
    if not T.is_empty():
      # sink is found
      t = P.subposet(T).minimal_elements()[0]
      break

  if t == v:
    # v is a regular vertex
    # compute FGH at the children of v
    cFGH = [ FGH(N,u,true_leaves) for u in N.neighbors_out(v) ]
    # apply Theorem 3
    H = prod(fgh[0]+fgh[2] for fgh in cFGH)
    G = sum((fgh[0]+fgh[1])*H/(fgh[0]+fgh[2]) for fgh in cFGH)
    F = x*prod(fgh[0]+fgh[2] + (fgh[0]+fgh[1])/x for fgh in cFGH) - x*H - G;
\end{verbatim}

\newpage

\begin{verbatim}
  else:
    # v is a source and t is a sink
    t_FGH = FGH(N, t, true_leaves)

    p = N.neighbors_in(t)
    if len(p) != 2:
      raise ValueError, "N is not a cactus network"

    N.delete_edge(p[0],t)
    # N represents L at this point
    L_FGH = FGH(N, v, true_leaves)
    N.add_edge(p[0],t)

    N.delete_edge(p[1],t)
    # N represents R at this point
    R_FGH = FGH(N, v, true_leaves)

    N.delete_edge(p[0],t)
    # N represents N_{s\t} U N_t at this point
    Nst_FGH = FGH(N, v, true_leaves)

    # remove edges in the branching path v => t from N to form B
    for q in p:
      u = q
      while u!=v:
        w = N.neighbors_in(u)
        if len(w)!=1:
           raise ValueError, "Parents error w"
        N.delete_edge(w[0],u)
        u = w[0]
    # N represents B at this point
    B_FGH = [ FGH(N, u, true_leaves) 
              for u in Set(P.closed_interval(v,p[0])) + Set(P.closed_interval(v,p[1])) ]
    B_prodH = prod(fgh[2] for fgh in B_FGH)
    # apply Theorem 9
    H = L_FGH[2] + R_FGH[2] - Nst_FGH[2] * (t_FGH[0] + t_FGH[2])
    G = L_FGH[1] + R_FGH[1] - Nst_FGH[1] * (t_FGH[0] + t_FGH[2]) 
        - (t_FGH[0] + t_FGH[1]) * B_prodH
    F = L_FGH[0] + R_FGH[0] - Nst_FGH[0] * (t_FGH[0] + t_FGH[2]) 
        - (t_FGH[0] + t_FGH[1]) * ( prod((fgh[0]+fgh[1])/x + fgh[2] for fgh in B_FGH) - B_prodH )
  return [F,G,H]
\end{verbatim}
\end{footnotesize}

\end{document}